\documentclass[preprint,3p]{elsarticle}

\usepackage{color}
\usepackage{amssymb}
\usepackage{amsmath}
\usepackage{appendix}
\usepackage{xcolor}
\usepackage{multicol}
\usepackage{float}
\usepackage[T1]{fontenc}

\newtheorem{Proposition}{Proposition}

\newtheorem{Remark}{Remark}

\newtheorem{Ex}{Example}%[section]

\newtheorem{proofhead}{Proof}

\newenvironment{proof}
    {
        \par
        \begin{proofhead}
        \normalfont
    }
    {   \qed
        \end{proofhead}
        \par
    }
\biboptions{sort&compress}
\journal{}

\begin{document}

\begin{frontmatter}

%% Title, authors and addresses

%% use the tnoteref command within \title for footnotes;
%% use the tnotetext command for theassociated footnote;
%% use the fnref command within \author or \address for footnotes;
%% use the fntext command for theassociated footnote;
%% use the corref command within \author for corresponding author footnotes;
%% use the cortext command for theassociated footnote;
%% use the ead command for the email address,
%% and the form \ead[url] for the home page:
%% \title{Title\tnoteref{label1}}
%% \tnotetext[label1]{}
%% \author{Name\corref{cor1}\fnref{label2}}
%% \ead{email address}
%% \ead[url]{home page}
%% \fntext[label2]{}
%% \cortext[cor1]{}
%% \affiliation{organization={},
%%             addressline={},
%%             city={},
%%             postcode={},
%%             state={},
%%             country={}}
%% \fntext[label3]{}

\title{Testing and estimation of the index of stability of univariate and bivariate symmetric $\alpha-$stable distributions via modified Greenwood statistic}

%% use optional labels to link authors explicitly to addresses:
%% \author[label1,label2]{}
%% \affiliation[label1]{organization={},
%%             addressline={},
%%             city={},
%%             postcode={},
%%             state={},
%%             country={}}
%%
%% \affiliation[label2]{organization={},
%%             addressline={},
%%             city={},
%%             postcode={},
%%             state={},
%%             country={}}

\author[inst1]{Katarzyna Skowronek}
\author[inst2]{Marek Arendarczyk}
\author[inst3]{Anna K. Panorska}
\author[inst3]{Tomasz J. Kozubowski}
\author[inst1]{Agnieszka Wyłomańska}

\affiliation[inst1]{organization={Faculty of Pure and Applied Mathematics, Hugo Steinhaus Center, Wroclaw University of Science and Technology},%Department and Organization
            addressline={Hoene-Wronskiego 13c},
            city={Wroclaw},
            postcode={50-376},
            %state={State One},
            country={Poland}}
                       \affiliation[inst2]{organization={Mathematical Institute, University of Wrocław},%Department and Organization
            addressline={pl. Grunwaldzki 2 },
            city={Wrocław},
            postcode={50-384},
            %state={State One},
            country={Poland}}
\affiliation[inst3]{organization={Department of Mathematics \& Statistics, University of Nevada, Reno},%Department and Organization
            addressline={1664 N Virginia St},
            city={Reno},
            postcode={89557},
            %state={State One},
            country={USA}}
         \begin{abstract}
We propose a testing and estimation methodology for univariate and bivariate symmatric $\alpha$-stable distributions using a modified version of the Greenwood statistic. Originally designed for positive-valued random variables, the Greenwood statistic, and its modified version tailored for symmetric distributions, have been predominantly applied to univariate random samples. In this paper, we extend the modified Greenwood statistic to a bivariate setting and examine its probabilistic properties within the class of $\alpha$-stable distributions, with a focus on the sub-Gaussian case. Additionally, we introduce a novel testing approach that considers two variations of the modified Greenwood statistic as test statistics for the bivariate case. In the univariate setting, we adapt the proposed testing methodology for estimating the stability index. The simulation studies presented demonstrate that our proposed methodology outperforms classical approaches previously used in this context and serves as an effective tool for distinguishing between Gaussian and $\alpha$-stable distributions with a stability index close to 2. The theoretical and simulation results are further illustrated with practical data examples.
\end{abstract}

%%Graphical abstract
%\begin{graphicalabstract}
%\includegraphics{grabs}
%\end{graphicalabstract}

%%Research highlights
%\begin{highlights}
%\item Research highlight 1
%\item Research highlight 2
%\end{highlights}

\begin{keyword}
%% keywords here, in the form: keyword \sep keyword
$\alpha-$stable distribution \sep bivariate sub-Gaussian distribution \sep Greenwood statistic \sep Monte Carlo simulations \sep testing for finite variance \sep testing hypotheses
\end{keyword}

\end{frontmatter}

%@@
%@@
%@@
\section{Introduction}
\label{biv.test.green.intro}
%@@
%@@
%@@
The primary objective of this paper is to introduce a methodology for testing whether a univariate or bivariate random sample originates from an $\alpha$-stable distribution within the sub-Gaussian class, utilizing the recently developed modified Greenwood statistic \cite{skowronek2024}. Although the primary focus is on testing bivariate distributions, we also examine the univariate case to highlight the versatility of the modified Greenwood statistic in addressing the testing  problem across both dimensions.  In the univariate setting, we also adapt the proposed testing methodology for estimating the stability index.

We recall that the Greenwood statistic, first introduced by M. Greenwood in 1946~\cite{medicine} and further elaborated in~\cite{Moran_1947}, has been extensively studied in the literature. Over the years, many researchers have explored this statistic and its extensions from a theoretical perspective. For instance, variants of the Greenwood statistic have been applied in goodness-of-fit tests for exponential and uniform distributions (see, e.g., \cite{goodness-fit}). In~\cite{albrecher2007} and~\cite{albrecher2009}, researchers derived the asymptotic properties of the moments of the Greenwood statistic, while~\cite{Albrecher_2010} examined the asymptotic distribution behavior, focusing on the existence of moments in the underlying distribution. The stochastic monotonicity of the Greenwood statistic, under the assumption of star-shaped stochastic monotonicity in the underlying random sample, was proven in~\cite{arendarczyk2022}. Furthermore, the Greenwood statistic was applied in~\cite{arendarczyk2023} to develop a test for inferring the tail index of generalized Pareto distributions. Notable applications of the statistic include its use in testing Taylor's law, as discussed in~\cite{taylor-law1,taylor-law2}. More recently, a generalization of the Greenwood statistic, involving three additional parameters, was proposed in~\cite{ALBRECHER2022}, where the authors analyzed its asymptotic properties for regularly varying distributions. Another generalization, focusing on the asymptotic efficiency of the proposed statistic, was presented in~\cite{GeneralGreen2}.

The Greenwood statistic and its extended versions discussed above have found numerous applications across diverse fields. Notable examples include their use in analyzing clustering events in both space and time, particularly in fields like medicine and epidemiology~\cite{medicine,epidemology}, genetics and genomics~\cite{genetics,genomics}, biology~\cite{biology}, economics and insurance~\cite{economics,insurance}, hydrology~\cite{hydrology}, optimization~\cite{optimization}, physics and materials science~\cite{physics,materials}, anomaly detection~\cite{anomaly}, internet traffic monitoring~\cite{internet}, and even athletics~\cite{athletics}.

Recently, the authors of \cite{skowronek2024} introduced a modified Greenwood statistic that, unlike the classical version, can be applied to any real-valued random samples. One of the most intriguing properties of both the classical Greenwood statistic and its modified version, tailored for symmetrically distributed univariate samples, is their stochastic monotonicity within the class of $\alpha$-stable distributions. This characteristic allows the modified Greenwood statistic, as well as its classical counterpart, to serve as a reliable test statistic for hypotheses within this general class of distributions.

Recall that the univariate $\alpha$-stable distributions are defined by four parameters, with the stability index $\alpha \in (0, 2]$ being the most important. The symmetric $\alpha$-stable family is a two-parameter class, characterized by the stability index $\alpha$ and the scale parameter $\sigma$. A key feature of $\alpha$-stable distributions is that for $\alpha < 2$, they exhibit heavy tails, with the corresponding random variable having infinite variance. Additionally, $\alpha$-stable distributions generalize the Gaussian distribution, converging to it as $\alpha = 2$ (see, e.g.,~\cite{stable3}). For further details, we refer the readers to classical works on $\alpha$-stable signals and models, such as \cite{shao22,alek_book,non_gauss,Nolan2020}.

The $\alpha$-stable distribution class has been applied widely in various fields, including financial markets \cite{mand,fin_new1,fin_new2}, physics \cite{phys_new1,phys_new2,phys21,phys22,phys23}, biology \cite{biol_new1,biol_new2}, medicine (particularly in heartbeat analysis) \cite{phys35}, climate dynamics \cite{phys33}, telecommunications \cite{tel1,tel2}, and condition monitoring 
\cite{HMM:measurement,mon_new1,Zulawinski_MSSP}.

The natural generalization of the univariate symmetric $\alpha$-stable distribution is its multivariate version, which is characterized by the parameter $\alpha \in (0,2]$ (having the same role as in the univariate case) and the spectral measure $\Gamma$, which encodes the dependence structure between the components. In the univariate scenario, the spectral measure is concentrated at a single point, providing insights into the scale parameter $\sigma$ of the $\alpha$-stable random variable.

A special case of the multivariate symmetric $\alpha$-stable distribution is the class of sub-Gaussian distributions \cite{stable3}. These are defined as the product of the square root of a totally skewed $\alpha$-stable random variable and a Gaussian random vector with zero mean and an appropriate covariance matrix. In this case, the spectral measure is symmetric, and the dependence between the components of the sub-Gaussian vector is determined by the covariance matrix parameters.

Similar to the univariate case where symmetric $\alpha$-stable distributions generalize the Gaussian, sub-Gaussian distributions can be viewed as an extension of the multivariate Gaussian class, reducing to it when the stability index $\alpha = 2$. The class of sub-Gaussian random vectors has been widely studied in the literature (see, e.g., \cite{stable3,grzesiek2022,subGauss1,subGaussapp1}). These vectors have found applications in diverse fields, including finance \cite{subGaussapp1,subGaussapp2,subGaussapp3,subGaussapp4}, signal processing \cite{subGaussapp5}, filtering algorithms \cite{subGaussapp7}, hydrology \cite{subGaussapp6}, and biomedical data analysis \cite{subGaussapp8}.

The relatively simple form of the modified Greenwood statistic, as discussed in \cite{skowronek2024}, allows for a straightforward extension from the univariate to the multivariate domain. In this paper, we propose two such extensions and explore their properties for the class of sub-Gaussian distributions. While our focus is primarily on the bivariate case for simplicity, the methodology can be generalized to higher dimensions. A key aspect of our analysis is the stochastic monotonicity of the proposed statistics within the class of sub-Gaussian distributions. The probabilistic properties of the modified Greenwood statistic in the bivariate setting make it suitable for testing whether a given random sample originates from a sub-Gaussian distribution. Since sub-Gaussian distributions reduce to multivariate Gaussian distributions when the stability parameter $\alpha = 2$, these extensions are also useful for testing multivariate normality.

There are several well-known methods in the literature for testing the multivariate Gaussian distribution. These include Mardia's tests for skewness and kurtosis \cite{mardia}, the Jarque-Bera test, which combines Mardia's measures \cite{jb2test}, the Henze-Zirkler test based on differences between the empirical and theoretical characteristic functions \cite{henzezirkler}, and Royston's test \cite{royston}, a multivariate generalization of the Shapiro-Wilk test. Our proposed method adds to this body of work, and simulation studies show that it outperforms many classical approaches, especially for small sample sizes and when the sub-Gaussian distribution closely approximates the Gaussian case, i.e., when
$\alpha$ is near 2. In the univariate scenario, we extend the methodology to design confidence intervals for the stability parameter $\alpha$. The theoretical developments are complemented by simulation results and an illustrative real-world example.

%The rest of the paper is organized as follows. In Section \ref{sec:stable} we recall the definition of the $\alpha-$stable distribution in univariate and multivariate scenario paying the main attention on the class of sub-Gaussian random vectors. In Section \ref{sec:greenwood} we define two versions of the modified Greenwood statistic adopted for the bivariate random samples and discuss their properties within the class of sub-Gaussian distributions. In Section \ref{sec:testing} we present the testing methodology based on the discussed statistics in the univariate and bivariate scenarios. In the next section we present the simulation study demonstrating the efficiency of the proposed methodology in the statistical testing problem. In Section \ref{sec:real} we illustrate the presented methodology for real-world data. Last section concludes the paper.

The reminder of this paper is organized as follows. In Section  \ref{sec:stable}, we recall the definition of the $\alpha$-stable distribution in both the univariate and multivariate scenarios, with a particular focus on the class of sub-Gaussian random vectors. Section \ref{sec:greenwood} introduces two versions of the modified Greenwood statistic adapted for bivariate random samples and explores their properties within the class of sub-Gaussian distributions. In Section \ref{sec:testing}, we present the testing methodology based on the proposed statistics in both the univariate and bivariate settings.  In the univariate case we also present the estimation method for $\alpha$ parameter. Section \ref{sim:section} provides a simulation study that demonstrates the efficiency of the proposed methodology in statistical testing. In Section \ref{sec:real}, we illustrate the application of the proposed methodology using real-world data. The final section offers concluding remarks.

%@@
%@@
%@@
\section{The $\alpha$-stable distribution}
\label{sec:stable}
%@@
%@@
%@@
%{\color{red} Since we are not working with general univariate or multivariate stable distributions, I deleted the general definitions and left the definitions for the symmetric $\alpha$-stables and sub-Gaussian vectors.}

In this section, we introduce the families of univariate and multivariate symmetric $\alpha$-stable distributions. 
%@@
%@@
\subsection{Univariate $\alpha$-stable distribution}
%@@
%@@
%An important class of heavy-tailed distributions widely used in many practical problems is the class of the $\alpha$-stable distributions $S(\alpha,\beta,\sigma,\mu)$. A random variable from $\alpha$-stable distribution is defined through its characteristic function (CF) $\Phi_X(\cdot)$ given by the following formula \cite{stable3}
%
%\begin{equation}
%    \Phi_X(t) = \begin{cases}
%        \text{exp} \{ -\sigma^\alpha|t|^\alpha
% \{1 - i\beta \text{sign}(t) \text{tan} \frac{\alpha \pi}{2} \} +i \mu t \} & \text{ for } \alpha \neq 1 \\
 %       \text{exp} \{ -\sigma|t| \{ 1 + i\beta\text{sign}(t)\frac{2}{\pi}\text{log}(|t|)\} +i \mu t \} & \text{ for } \alpha = 1
%    \end{cases}.
%\end{equation}
%
%\noindent Here, $\alpha \in (0,2]$ is stability index - the most crucial parameter describing the distribution, $\beta \in [-1,1]$ is the skewness parameter,  $\mu \in \mathbb{R}$ is a shift parameter, and $\sigma > 0$ is the scale parameter. The stability index $\alpha$ dictates the heavy-tail behavior of the distribution; specifically,  the tail of the distribution follows a power-law decay, expressed as $1-F_X(x) \sim C x^{-\alpha}$, where $F_X(\cdot)$ is the cumulative distribution function (CDF) of $X$. Thus, the lower the stability index $\alpha$,  the more impulsive behavior within the random sample can be observed. It is noteworthy that the variance of the  $\alpha$-stable distribution does not exist when $\alpha \in (0,2)$. However, for $\alpha = 2$, the $\alpha$-stable distribution reduces to the Gaussian distribution, resulting in finite variance.\\
A symmetric $\alpha$-stable distribution, denoted by $S(\alpha,\sigma)$, is defined  by its characteristic function 
%\indent In this work, we specifically analyze the class of symmetric $ \alpha$-stable distributions $S(\alpha,\sigma)$, where $\beta = 0$ and $\mu = 0$. Thus, the CF describing the distribution simplifies to:
%
\begin{equation}\label{char}
    \Phi_X(t) =  \text{exp} \left( - \sigma^\alpha |t|^\alpha\right), \,\,\, t\in \mathbb R,
\end{equation}
where $\alpha \in (0, 2]$ is the tail parameter (stability index) and $\sigma > 0$ is the scale parameter. When $\alpha = 2$, the $\alpha$-stable distribution is the Gaussian distribution. When $\alpha \in (0,2)$ variance of $S(\alpha,\sigma)$ does not exist and the variability of a $S(\alpha,\sigma)$ random variable increases as $\alpha$ decreases. 
%@@
%@@
\subsection{Multivariate $\alpha$-stable vectors}
%@@
%@@
%The multivariate $\alpha$-stable distribution serves as a generalization of the univariate case to $d$-dimensions,  representing the distribution of a random vector $\mathbf{X} = (X_1,  \ldots, X_d)$. The distribution of the $\alpha$-stable random vector is defined through a CF given by:
%
%\begin{equation}
  %  \mathbf{\Phi}_{\mathbf{X}}(\mathbf{\theta}) = \begin{cases}
 %       \text{exp} \{- \int_{S_d} |\langle \theta , \mathbf{s} \rangle|^\alpha (1-i\text{sign}(\langle \theta , \mathbf{s} \rangle) \text{tan}(\frac{\alpha \pi}{2})) \Gamma(ds) + i \langle \sigma, \mu_0 \rangle \} &  \text{ for } \alpha \neq 1 \\
 %       \text{exp} \{- \int_{S_d} |\langle \theta , \mathbf{s} \rangle| (1+\frac{2}{\pi}i\text{sign}(\langle \theta , \mathbf{s} \rangle) \text{log}(\langle \theta , \mathbf{s} \rangle)) \Gamma(ds) + i \langle \sigma, \mu_0 \rangle \} & \text{ for } \alpha = 1,
 %   \end{cases}
%\end{equation}
%
%\noindent where $\mu_0 \in \mathbb{R}^d$ is a shift vector, $\langle \cdot, \cdot \rangle$ denotes the inner product,
%$\Gamma(\cdot)$ represents the spectral measure, and  $S_d$ is the unit sphere in $\mathbb{R}^d$. The spectral measure is a finite measure defined on $S_d$ and captures both the skewness and the dependency structure among the components of the random vector.\\

\indent In this work, we focus on sub-Gaussian distributions which are a special case of the multivariate $\alpha$-stable vectors.   A random vector $\mathbf{X} = (X_1, \ldots, X_d)$ from the sub-Gaussian distribution is defined as follows
\begin{equation}
\label{vector}
    \mathbf{X} = (X_1, \ldots, X_d) = A^{\frac{1}{2}} \mathbf{G} = (A^{\frac{1}{2}} G_1, \ldots, A^{\frac{1}{2}} G_d),
\end{equation}
\noindent where $\mathbf{G}$ is a standard (zero-one) Gaussian vector in $\mathbb{R}^d$ and $A$ is an independent from $\mathbf{G}$ $\alpha/2$-stable random variable with  $\sigma = (\text{cos}(\frac{\pi\alpha}{4}))^{2/\alpha}$. The characteristic function of a sub-Gaussian vector is 
%, $\beta=1$, and $\mu=0$. As proved in~\cite{stable3}, the distribution of the random vector from the sub-Gaussian distribution is characterized by the following CF:
%
\begin{equation}
    \mathbf{\Phi_X}(\mathbf{\theta}) = \text{exp} \left\{ -\frac{1}{2} \left| \sum_{i=1}^{d} \sum_{j=1}^{d} R_{ij} \theta_i \theta_j \right|^{\alpha/2 }\right\},
\end{equation}
\noindent where $R_{ij}=\rho_{ij}R_{ii}R_{jj}$ is the covariance between $G_i$ and $G_j$, while $R_{ii}=Var(G_i)$, and 
$\rho_{ij}$ denotes the correlation between $G_i$ and $G_j$. 
%For the sub-Gaussian distribution, the spectral measure
%$\Gamma$ is symmetric and assigns equal weights to antipodal sets on the sphere $S_d$. Moreover, if we assume that
%$\mathbf{G}$ in (\ref{vector}) is a Gaussian vector of independent components (i.e.,  $\rho_{ij}=0$ for $i\neq j$), the spectral measure is uniform  when $\alpha<2$ (see Proposition 2.5.5 in \cite{stable3}). Similar to the univariate case, for $\alpha=2$, the vector given in (\ref{vector}) becomes a Gaussian random vector. 
Note, that all univariate marginals $A^{\frac{1}{2}}G_i$ for $i=1,\ldots,d$ of the random vector $\mathbf{X}$ in (\ref{vector}), have 
univariate symmetric $\alpha-$stable distributions with  $\sigma=1/\sqrt{2}$ (see Proposition 1.3.1 in \cite{stable3}).

%Thus, in this case the dependence between the sub-Gaussian random variables $G_i$ and $G_j$ is known.
%@@
%@@
%@@
\section{Modified Greenwood statistic for sub-Gaussian random vectors}
\label{sec:greenwood}
%@@
%@@
%@@
In this section, we first recall the definition of the recently introduced modified Greenwood statistic \cite{skowronek2024}, and its main properties for univariate random samples coming from the symmetric $\alpha$-stable distribution. We then introduce two approaches for applying the discussed statistic to the analysis of multivariate random samples and discuss the properties of the extended versions of the modified Greenwood statistic for sub-Gaussian vectors. In this paper,  the analysis are provided for the bivariate case.

Let ${\mathbb{X}}=(X_1, X_2, \ldots, X_n)$ be univariate random sample - a sequence of independent and identically distributed (IID) random variables. Following \cite{skowronek2024}, we define a modified Greenwood statistic as follows
\begin{eqnarray}
\label{def_Greenwood}
        S_{\mathbb{X}} := \frac{\sum_{i=1}^n |X_i|^2}{\left(\sum_{i=1}^n |X_i|\right)^2}.
\end{eqnarray}
%
%The statistic defined in (\ref{def_Greenwood}) is scale invariant and stochastically decreasing with respect to the parameter $\alpha$ within the class of symmetric $\alpha-$stable distributions (see \cite{skowronek2024}).
The statistic defined in (\ref{def_Greenwood}) is scale invariant and stochastically decreases with respect to the parameter $\alpha$ within the class of symmetric $\alpha$-stable distributions (see \cite{skowronek2024}). This means that, for any constant scaling factor applied to the data, the value of the statistic remains unchanged, and as the stability parameter
$\alpha$ increases, the distribution of the statistic becomes stochastically smaller. This behavior aligns with the fact that symmetric $\alpha$-stable distributions become less heavy-tailed as $\alpha$ approaches 2, transitioning toward the Gaussian distribution.

We shall consider two statistics based on (\ref{def_Greenwood}) for testing hypotheses about the parameter $\alpha$ within the class of sub-Gaussian random vectors, focusing on the bivariate case (i.e. when $d=2$). A bivariate random sample coming from a sub-Gaussian distribution will be denoted by $\vec{\mathbb{X}}=(\mathbf{X}_1, \mathbf{X}_2,\ldots, \mathbf{X}_n)$ where $\mathbf{X}_i=\left(X_1^{(i)}, X_2^{(i)}\right)$, $i=1,\ldots,n$.

%In the next part, we define two  statistics that are based on statistic defined in Eq. (\ref{def_Greenwood}) and are further considered for testing  sub-Gaussian random vectors.  In this paper we restrict ourselves to the bivariate case (i.e., $d=2$). In the next subsections we use the notation $\vec{\mathbb{X}}=(\mathbf{X}_1, \mathbf{X}_2,\ldots, \mathbf{X}_n)$, where $\mathbf{X}_i=\left(X_1^{(i)}, X_2^{(i)}\right)$, $i=1,2,\ldots,n$, has sub-Gaussian distribution, for describing a  bivariate random sample coming from sub-Gaussian distribution.

%@@
%@@
\subsection{Bivariate case: the first scenario}
%@@
%@@
The first scenario is based on the statistic
\begin{eqnarray}
\label{def_Greenwood2}
    S_{\vec{\mathbb{X}}}^1=S_\mathbb{W},
\end{eqnarray}
where $\mathbb{W}=(W_1, \ldots, W_n)$ and $W_i=X_1^{(i)}+X_2^{(i)}$, $i=1,2,\ldots,n$.  Let us note that, for $\alpha=2$, the random variable $W=W_i$ has zero-mean Gaussian distribution with variance equal to $Var(X_1)+Var(X_2)+2\rho Var(X_1)Var(X_2)$, where $\rho$ is the correlation between $X_1=X_1^{(i)}$ and $X_2=X_2^{(i)}$. Thus, as the modified Greenwood statistic (\ref{def_Greenwood}) is scale invariant, the statistic $S_{\vec{\mathbb{X}}}^1$ does not dependent on $\rho$ and the variances of $X_1$ and $X_2$. Moreover, the same property holds for $\alpha<2$.  In this case, the random variable $W$ has a symmetric $\alpha-$stable distribution with parameter $\alpha$ and scale parameter $\sigma=\sqrt{1+\rho}$ (see Proposition 1.3.1. in \cite{stable3}). In both cases,   $\alpha=2$ and $\alpha\neq 2$, the random variables $W_1,W_2,\ldots, W_n$ are mutually independent. Thus, in this scenario, the statistic $S_{\vec{\mathbb{X}}}^1$ is also independent of $\rho$ and reduces to the modified Greenwood statistic for the univariate symmetric $\alpha$-stable random sample. Therefore, according to Proposition 3 in \cite{skowronek2024},  it is stochastically decreasing with respect to the parameter $\alpha$, as stated in the result below.
%@@
%@@
\begin{Proposition}
Let $S_{\vec{\mathbb{X}}}^1$ be the statistic defined in (\ref{def_Greenwood2}), where
$\vec{\mathbb{X}}=(\mathbf{X}_1,\mathbf{X}_2, \ldots, \mathbf{X}_n)$ is a random sample from a bivariate sub-Gaussian distribution with index $\alpha \in (0,2]$. Then $S_{\vec{\mathbb{X}}}^1$ is stochastically decreasing with respect to
$\alpha$.
\end{Proposition}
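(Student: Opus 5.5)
The plan is to reduce the statement to the univariate result, Proposition 3 in \cite{skowronek2024}, which says that the modified Greenwood statistic (\ref{def_Greenwood}) of an IID sample from $S(\alpha,\sigma)$ is stochastically decreasing in $\alpha$. Two ingredients are needed. First, the law of $S_{\vec{\mathbb{X}}}^1$ must depend on the underlying model only through $\alpha$. Second, the sample $\mathbb{W}$ must be an IID symmetric $\alpha$-stable sample.

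The first step is to identify the law of $W_i=X_1^{(i)}+X_2^{(i)}$. By (\ref{vector}), $W_i=A_i^{1/2}(G_1^{(i)}+G_2^{(i)})$. Here $G_1^{(i)}+G_2^{(i)}$ is a zero-mean Gaussian variable, with variance $v$ determined by the covariance matrix. It is independent of $A_i$. Hence $W_i\stackrel{d}{=}\sqrt{v}\,A_i^{1/2}Z$ with $Z$ standard normal.

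To see that this is symmetric $\alpha$-stable, I would avoid rewriting the vector and compute the characteristic function directly. Taking $\theta=(t,t)$ in the sub-Gaussian characteristic function gives
\begin{equation*}
\Phi_{W}(t)=\exp\left\{-\tfrac12\left|t^2\textstyle\sum_{i,j}R_{ij}\right|^{\alpha/2}\right\}=\exp\left\{-\tfrac12 v^{\alpha/2}|t|^{\alpha}\right\},
\end{equation*}
where $v=\sum_{i,j}R_{ij}$. Comparing with (\ref{char}), $W\sim S(\alpha,\sigma_W)$ with $\sigma_W^\alpha=v^{\alpha/2}/2$. For $\alpha=2$ this is the Gaussian case. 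The degenerate case $v=0$, i.e. $G_2=-G_1$, must be excluded, since then $W\equiv 0$ and the statistic is undefined. I would state this as a standing assumption, namely $\rho\neq -1$ when the variances are equal.

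Next, independence and identical distribution of $W_1,\dots,W_n$ follow at once. Each $W_i$ is a fixed measurable function of $\mathbf{X}_i$, and the vectors $\mathbf{X}_i$ are IID.

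Then I use scale invariance. Since $S_{c\mathbb{W}}=S_{\mathbb{W}}$ for every $c>0$, we get $S_{\mathbb{W}}\stackrel{d}{=}S_{\mathbb{Y}}$, where $\mathbb{Y}$ is an IID sample from $S(\alpha,1)$. Its law therefore depends only on $\alpha$ and not on $\rho$ or on the variances. Applying Proposition 3 of \cite{skowronek2024} to $S_{\mathbb{Y}}$ then gives the stochastic monotonicity of $S^1_{\vec{\mathbb{X}}}$ in $\alpha$.

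I expect no genuine obstacle here, since the real work is in the univariate result. The only point needing care is the marginal computation. The argument has to rest on the characteristic function rather than on the variance formula quoted in the text, which has a typo: the cross term should involve standard deviations, not variances. It also has to state the nondegeneracy condition explicitly.
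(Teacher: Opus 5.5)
Your proposal is correct and follows the same route as the paper: show that the $W_i$ are IID symmetric $\alpha$-stable, use scale invariance to remove the dependence on $\rho$ and the variances, and then invoke Proposition 3 of \cite{skowronek2024}. Your exclusion of the degenerate case $v=0$ (equal variances with $\rho=-1$, where the paper's scale $\sqrt{1+\rho}$ vanishes) addresses a point the paper glosses over, and the exact value of $\sigma_W$ does not affect the argument; with the factor $\tfrac12$ placed inside the power, as in the standard sub-Gaussian characteristic function, one gets $\sigma_W=\sqrt{v/2}$, which matches the paper's $\sqrt{1+\rho}$.
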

%@@
%@@

%@@
%@@
\subsection{Bivariate case: the second scenario}
%@@
%@@
In the second scenario, we consider the statistic
\begin{eqnarray}
\label{def_Greenwood1}
    S_{\vec{\mathbb{X}}}^2=S_\mathbb{Y},
\end{eqnarray}
where $\mathbb{Y}=(Y_1, \ldots, Y_n)$ and $Y_i=(X_1^{(i)})^2+(X_2^{(i)})^2$, $i=1,2,\ldots,n$.  Let us note that,  for $\alpha=2$, the random variable $(X_1,X_2)\stackrel{d}{=}(X_1^{(i)},X_2^{(i)})$ has  a bivariate Gaussian  distribution with mean zero and covariance matrix $\boldsymbol \Sigma=[R_{ij}]_{i,j=1,2}$, where $R_{ii}=Var(X_i)$, $i=1,2$,  and $R_{ij} = \rho \sqrt{Var(X_1)Var(X_2)}$ ($i\neq j$). Here,  $\rho\in [-1, 1]$  is the correlation between $X_1$ and $X_2$.  %Let us note that for a sub-Gaussian random vector with $\alpha=2$, the random variables $X_1$ and $X_2$ have the same variances. However, the following part of this section presents the discussion for the general case.

By spectral representation of the covariance matrix,
$\boldsymbol \Sigma = {\bf D}\mbox{diag}(\beta_1, \beta_2){\bf D}^\top$, where ${\bf D} = [{\bf d}_1, {\bf d}_2]$
is an orthogonal matrix of (column) eigenvectors ${\bf d}_1$, ${\bf d}_2$ and $\mbox{diag}(\beta_1, \beta_2)$ is a diagonal matrix of the corresponding eigenvalues $\beta_1$, $\beta_2$, we can express the random vector ${\bf X} = (X_1, X_2)$ as ${\bf X}\stackrel{d}{=} \beta_1 {\bf d}_1Z_1 +  \beta_2 {\bf d}_2Z_2$, where the $Z_1,Z_2$ are IID standard Gaussian variables. Thus, by the orthogonality of the $d_1,d_2$ and properties of the inner product, we have
\begin{equation}
\label{kret1}
Y = X_1^2 + X_2^2 = \langle {\bf X}, {\bf X} \rangle \stackrel{d}{=} \beta_1 Z_1^2+\beta_2Z_2^2,
\end{equation}
where the $Z_1^2,Z_2^2$ are IID $\chi^2$ random variables with one degree of freedom (and $Y$ has a generalized $\chi^2$ distribution, see \cite{MatPro92}). By factoring out the larger of the two eigenvalues in (\ref{kret1}), we obtain
\begin{equation}
\label{kret2}
Y \stackrel{d}{=} (\beta_1 \vee \beta_2) (\beta Z_1^2 + Z_2^2),
\end{equation}
where
\begin{equation}
\label{beta_def}
\beta  = \frac{\beta_1 \wedge \beta_2}{\beta_1 \vee \beta_2} \in [0,1]
\end{equation}
and the $\wedge$/$\vee$ are the $\min$/$\max$ operators.

Since the Greenwood statistic (\ref{def_Greenwood1}) is evaluated on the random sample $\mathbb{Y}=(Y_1, \ldots, Y_n)$ where $Y_i\stackrel{d}{=} Y$, it is clear that the factor $(\beta_1 \vee \beta_2)$ on the right-hand side in (\ref{kret2}) has no effect  on its value, as that statistic is scale invariant. However, it will be affected by $\beta$. In turn, the value of this
$\beta$ will generally be affected by the variances of $X_1$ and $X_2$ as well as their correlation $\rho$. The nature of this dependence is provided in the result below.
%@@
%@@
\begin{Proposition}
\label{beta.fun.prop}
Let ${\bf X} = (X_1, X_2)$ be a random vector with correlation $\rho$ and covariance matrix $\boldsymbol \Sigma=[R_{ij}]_{i,j=1,2}$, where $R_{ii}=Var(X_i)$, $i=1,2$, and $R_{ij} = \rho \sqrt{Var(X_1)Var(X_2)}$ ($i\neq j$). Then, the quantity $\beta$ in (\ref{beta_def}), where $\beta_1$ and $\beta_2$ are the two eigenvalues of $\boldsymbol \Sigma$, is given by
\begin{equation}
\label{fun.h}
\beta = h(\gamma, r) = \frac{1+\gamma - \sqrt{(1-\gamma)^2+4\gamma r}}{1+\gamma + \sqrt{(1-\gamma)^2+4\gamma r}},
\end{equation}
where $r=\rho^2 \in [0,1]$ and
\begin{equation}
\label{gamma_def}
\gamma  =
%\frac{R_{11} \wedge R_{22}}{R_{11} \vee R_{22}}  =
\frac{Var(X_1)\wedge Var(X_2)}{Var(X_1) \vee Var(X_2)}\in [0,1].
\end{equation}
Moreover, the function $h(\gamma, r)$ is strictly increasing in $\gamma\in [0,1]$ for each fixed $r\in [0,1)$ and is strictly decreasing in $r\in [0,1]$ for each fixed $\gamma\in (0,1]$.
\end{Proposition}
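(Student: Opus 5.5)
The plan is to compute the eigenvalues of $\boldsymbol \Sigma$ explicitly, reduce their ratio to the stated form, and then read off the monotonicity from that form.

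\textbf{Eigenvalues and reduction to $h$.} Write $v_1=Var(X_1)$, $v_2=Var(X_2)$, and let $M=v_1\vee v_2$, $m=v_1\wedge v_2$, so $\gamma=m/M$. Since $\beta$ is symmetric in the labels of the coordinates, we may assume $v_1=M$. The characteristic polynomial of $\boldsymbol \Sigma$ is $\lambda^2-(v_1+v_2)\lambda+v_1v_2(1-\rho^2)$, with roots
\[
\lambda_\pm=\tfrac12\Bigl(v_1+v_2\pm\sqrt{(v_1+v_2)^2-4v_1v_2(1-r)}\Bigr).
\]
The discriminant equals $(v_1-v_2)^2+4v_1v_2 r\ge 0$. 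Since $\beta=\lambda_-/\lambda_+$, dividing numerator and denominator by $M$ gives exactly
\[
h(\gamma,r)=\frac{1+\gamma-\sqrt{(1-\gamma)^2+4\gamma r}}{1+\gamma+\sqrt{(1-\gamma)^2+4\gamma r}}.
\]
If $M=0$ the matrix is degenerate and the convention $\gamma=0$ applies; I would note this edge case briefly.

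\textbf{Monotonicity in $r$.} Write $h=(a-s)/(a+s)$ with $a=1+\gamma$ and $s=\sqrt{(1-\gamma)^2+4\gamma r}\in[0,a]$. The map $s\mapsto (a-s)/(a+s)$ is strictly decreasing for $a>0$. For fixed $\gamma\in(0,1]$, $s$ is strictly increasing in $r$, so $h$ is strictly decreasing in $r$.

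\textbf{Monotonicity in $\gamma$.} This is the main obstacle, because both $a$ and $s$ move with $\gamma$. The cleanest route is to write $h=g(u)$ with $u=s/a$ and $g(u)=(1-u)/(1+u)$, which is strictly decreasing. It then suffices to show that
\[
u^2=\frac{(1-\gamma)^2+4\gamma r}{(1+\gamma)^2}=1-\frac{4\gamma(1-r)}{(1+\gamma)^2}
\]
is strictly decreasing in $\gamma$ on $[0,1]$ for fixed $r<1$. This holds because $\gamma/(1+\gamma)^2$ has derivative $(1-\gamma)/(1+\gamma)^3>0$ on $[0,1)$. At $\gamma=1$ the derivative vanishes, but only at that single endpoint, so strict monotonicity on the closed interval still holds. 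Therefore $u$ decreases, and hence $h$ strictly increases in $\gamma$.
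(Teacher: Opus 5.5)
Your proof is correct. Its first half (characteristic polynomial, explicit eigenvalues, division by $Var(X_1)\vee Var(X_2)$) is exactly the paper's argument.

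You depart from the paper in the monotonicity part. The paper only asserts that the partial derivatives of $h$ have the appropriate signs. You instead write $h=g(u)$ with $g(u)=(1-u)/(1+u)$ and use the identity $u^2=1-4\gamma(1-r)/(1+\gamma)^2$. This reduces the $\gamma$-claim to the monotonicity of the single function $\gamma/(1+\gamma)^2$. The same identity would also give the $r$-claim at once, since $u^2$ is affine and increasing in $r$ whenever $\gamma>0$. Your direct argument via $s$ is fine too.

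Carried out explicitly, the paper's route gives, with $s=\sqrt{(1-\gamma)^2+4\gamma r}$,
\[
\frac{\partial h}{\partial \gamma}=\frac{4(1-\gamma)(1-r)}{s\,(1+\gamma+s)^2}, \qquad \frac{\partial h}{\partial r}=-\frac{4\gamma(1+\gamma)}{s\,(1+\gamma+s)^2}.
\]
These formulas make the signs transparent and quantify how sensitive $\beta$ is to $\gamma$ and $\rho$. However, both break down at the corner $(\gamma,r)=(1,0)$, where $s=0$. Indeed $h(1,r)=(1-\sqrt r)/(1+\sqrt r)$ has infinite slope at $r=0$, so the paper's route needs a separate continuity argument there. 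Your composition of monotone maps never differentiates $s$, so it handles that corner automatically.

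One small point: your remark about $M=0$ is unnecessary, and as stated slightly off, since then $\gamma$ would be $0/0$. The hypothesis that the correlation $\rho$ exists already forces both variances to be positive. Hence $\lambda_+>0$ and $\beta$ is well defined.
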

%@@
%@@
\begin{proof}
Straightforward algebra shows that the two eigenvectors of $\boldsymbol \Sigma$ are given by
\begin{equation}
\label{lambdas}
    \beta_{1,2} = \frac{1}{2}\left(R_{11} + R_{22} \pm \sqrt{(R_{11}-R_{22})^2+4R_{11}R_{22}\rho^2}\right),
    \end{equation}
so that
\begin{equation}
\label{beta_def1}
\beta  = \frac{\beta_1 \wedge \beta_2}{\beta_1 \vee \beta_2}  =
\frac{R_{11} + R_{22} - \sqrt{(R_{11}-R_{22})^2+4R_{11}R_{22}\rho^2}}{R_{11} + R_{22} + \sqrt{(R_{11}-R_{22})^2+4R_{11}R_{22}\rho^2}}.
\end{equation}
Upon dividing the numerator and the denominator of the expression on the right-hand side in (\ref{beta_def1}) by $R_{11} \vee R_{22}$ we obtain (\ref{fun.h}). The monotonicity properties of the function $h(\gamma,r)$ are established in a standard way by checking that the partial derivatives have appropriate signs.
\end{proof}
%@@
\begin{Remark}
The largest value of $\beta$ is equal to $1 = h(1,0)$ (the variances are equal while the correlation is zero).  The smallest value of $\beta$ is zero, and it occurs at the boundary $\{(0,r) : 0\leq r\leq 1\} \cup \{(\gamma,1) : 0\leq \gamma \leq 1\}$ of the unit square. In particular, we have $\beta=0$ whenever $\rho=\pm 1$, regardless of the ratio of the variances.
By the continuity of the function $h(\gamma, r)$, the value of $\beta$ will be close to zero when the correlation is close to $\pm 1$. If the variances are equal, then we have
\begin{eqnarray}
\label{beta_rho}
\beta = h(1,r) = \frac{1 - |\rho|}{1 + |\rho|},
\end{eqnarray}
and the inverse relation between $\beta$ and $|\rho|$ is quite clear, with small values of $\beta$ for large values of $|\rho|$.
However, the value of $\beta$ can be also close to zero for any value of $\rho$, in particular when the correlation is zero or close to zero and the ratio of the variances is very different than one (so that $\gamma$ is close to zero).
\end{Remark}
%@@
%{\textcolor{red}{Marek ma zmienic oznaczenia 
%\begin{Remark}
% To get an insight about what values of $\gamma$ and $r=\rho^2$ produce which values of $\beta$, in particular small values of $\beta$, close to zero (where the test based on this statistic performs well) we may wish to include a graph showing the level curves $h(\gamma,r)=c$ of the function $h(\gamma, r)$. For any $c\in [0,1]$, these curves consist of the points $(\gamma, r)$ in the unit square that satisfy the equation:
%
%\[
%(1-c)^2((1+\gamma)^2 - (1+c)^2(1-\gamma)^2 = 4(1+c)^2 \gamma r.
%\]
%
%When we solve this equation for $r$, we get:
%
%\[
%r = \frac{1}{2}\left[1+\left(\frac{1-c}{1+c}\right)^2 \right] - \frac{1}{4}\left[1-\left(\frac{1-c}{1+c}\right)^2 \right]\left[ \gamma +\frac{1}{\gamma} \right].
%\]
%
%The above function is monotonically increasing in $\gamma$ on the interval $(0,1]$. We may wish to plot these curves (their common part with the unit square) for various values of $c$, such as $c=0, 0.1, 0.2, \ldots , 0.9, 1$.
%\end{Remark}
%}}
Simulations show that the distribution of the Greenwood statistic (\ref{def_Greenwood1}) is stochastically decreasing with respect to $\beta\in [0,1]$ when it is evaluated on the random sample $Y_1, \ldots, Y_n$ and $Y_i\stackrel{d}{=}Y$ with $Y$ given by (\ref{kret2}). Obviously,  the factor $(\beta_1 \vee \beta_2)$, having to do with the two eigenvalues of the covariance matrix, plays no role here due to invariance property of that statistic with respect to scaling. This plays a crucial role in setting up a rejection region for testing the null hypothesis that $\alpha=2$ based on this statistic, where $\alpha\in (0,2]$ is the tail index of the underlying bivariate sub-Gaussian distribution. In Fig. \ref{fig:S2dist}, we present the CDF of $S_{\vec{\mathbb{X}}}^2$ statistic for a bivariate Gaussian distribution with different values of the $\beta$ parameter (see left panel) and the corresponding right tail (1-CDF) in a log-log scale (right panel). The plots were obtained based on $10,000$ Monte Carlo simulations of bivariate Gaussian random vectors of length $100$. From the plots, we can conclude that the $S_{\vec{\mathbb{X}}}^2$ statistic exhibits stochastic ordering with respect to $\beta$, with its survival function taking its highest value when $\beta=0$ (which corresponds to the case $\rho=\pm 1$).

Note that the discussion presented above, expressed in terms of the $\beta$ parameter, corresponds to the case where $\alpha = 2$. 
When $\alpha < 2$, the $\beta$ parameter also influences the behavior of the $S_{\vec{\mathbb{X}}}^2$ statistic. In this case the simulation study presented in Section \ref{sim:section} clearly confirms the stochastic decrease of the $S_{\vec{\mathbb{X}}}^2$ statistic with respect to  $\alpha $ for given $\beta$, as shown in Fig. \ref{fig:S2rho}. This property of the $S_{\vec{\mathbb{X}}}^2$ statistic is summarized in the following remark.

\begin{Remark} \label{stochorder.prop}
Let $S_{\vec{\mathbb{X}}}^2$ be the statistic defined in (\ref{def_Greenwood1}), where
$\vec{{\mathbb{X}}}=(\mathbf{X}_1,\mathbf{X}_2\ldots, \mathbf{X}_n)$ is a random sample coming from a bivariate sub-Gaussian distribution with index of stability $\alpha \in (0,2]$. Then, based on the results of simulations, for a given $\beta \in [0,1]$, $S_{\vec{\mathbb{X}}}^2$ is observed to be stochastically decreasing with respect to $\alpha$. 
\end{Remark}

\begin{figure}
    \centering
    \includegraphics[width=0.8\linewidth]{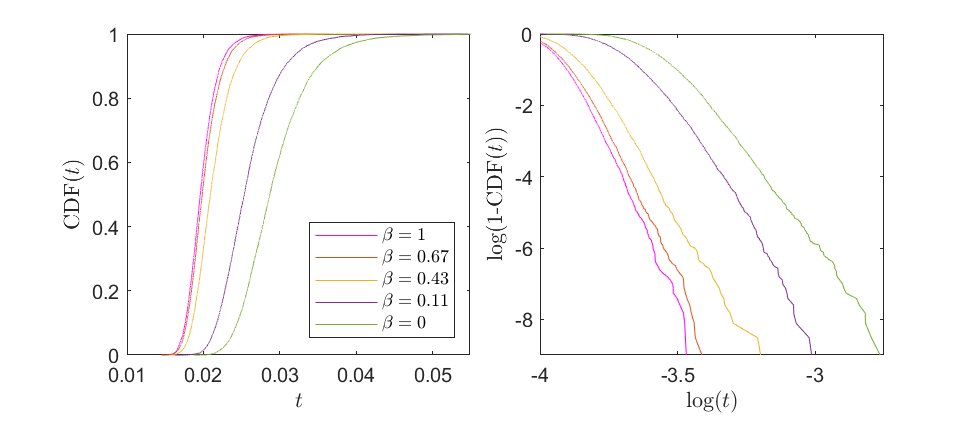}
    \caption{The CDF (left panel) and the tail of the distribution (right panel) of $S_{\vec{\mathbb{X}}}^2$ statistic for bivariate Gaussian distribution with different values of $\beta$ parameters. The plots were obtained based on $10000$ Monte Carlo simulations of bivariate Gaussian random vectors of length $100$.}
    \label{fig:S2dist}
\end{figure}

Based on the theoretical results discussed above and supported by simulations, we present Table \ref{tab1}, which compares the key properties of the two statistics introduced for bivariate sub-Gaussian random vectors. The insights from this comparison serve as a foundation for the testing methodology that will be introduced in the next section. The summary is presented using the parameters $\alpha$ and $\beta$, as the subsequent analysis is also discussed in these terms. The relationship between $\beta$ and $\rho$ is given in Eq. (\ref{fun.h}).

%As a summary of this section,  based on the theoretical results and conducted simulations, we present Table \ref{tab1}, which compares the discussed properties of the two statistics introduced  for bivariate sub-Gaussian random vectors. This table serves as a foundation for the testing methodology introduced in the next section.
%
\begin{table}[htp!]
\centering
\begin{tabular}{c|c|c}
\hline
parameters & $ S_{\vec{\mathbb{X}}}^1$ (Eq. (\ref{def_Greenwood2}))& $ S_{\vec{\mathbb{X}}}^2$ (Eq. (\ref{def_Greenwood1}))  \\ \hline \hline
$\alpha=2$ & independent of $\beta$ & stoch. order wrt $\beta$ \\\hline
$\alpha<2$& independent of $\beta$& no stoch. oder wrt $\beta$  \\\hline
$0\leq \beta \leq 1$&stoch. order wrt $\alpha$ &stoch. order wrt $\alpha$ \\\hline
\end{tabular}
\caption{Comparison of the properties of the considered statistics for bivariate sub-Gaussian random vectors. }\label{tab1}
\end{table}
%

%@@
%@@
%@@
\section{Testing methodology}
\label{sec:testing}
%@@
%@@
%@@
%In this section, we present the proposed application of modified Greenwood statistic in the goodness-of-fit testing. At first, we discuss the application of the  statistic in testing for univariate symmetric $\alpha-$stable distributions and also in the estimation of the parameter $\alpha$. Secondly, we extend the proposed methodology to multivariate case for the class of sub-Gaussian distributed vectors.
In this section, we discuss the use of the modified Greenwood statistic for tests about $\alpha$ within the $\alpha$-stable family of distributions in both univariate and multivariate cases. In the univariate case we also present the estimation method for $\alpha$ parameter.
%.  of univariate and multivariate distributions. We first introduce the methodology for univariate symmetric $\alpha$-stable distributions, focusing on both testing and parameter estimation. We then extend this approach to the multivariate case, specifically for the class of sub-Gaussian distributed vectors.
%@@
%@@
\subsection{The univariate case}
\label{int}
%@@
%@@

%The Greenwood statistic can be used for testing whether a given random sample originates from a Gaussian distribution within the framework of $\alpha$-stable distributions.
As discussed in  \cite{skowronek2024}, the Greenwood statistic can be used to  test for Gaussianity within the class of $\alpha$-stable distributions. The problem reduces to testing $\alpha=2$ (Gaussian distribution) versus $\alpha < 2$ (heavy tailed $\alpha$-stable distribution. Given a random sample $\vec{\mathbb{X}}$ from an $\alpha$-stable model, we focus on testing  the following hypotheses
\begin{equation}\label{hyp1}
    \mathcal{H}_0 \text{ : } \alpha = 2 \text{, } \mathcal{H}_1 \text{ : } \alpha < 2.
\end{equation}
\noindent We propose $S_{\mathbb{X}}$,  defined in (\ref{def_Greenwood}) as the test statistic.  Since $S_{\mathbb{X}}$, decreases stochastically with respect to  $\alpha$ (see, Proposition 3 (ii) in \cite{arendarczyk2022}) we can construct the rejection region as the interval

\begin{equation}\label{rejection1}
    [\hat{Q}_{1-c}(n),1],
\end{equation}
\noindent where $n$ is the sample size, $\hat{Q}_{1-c}(n)$ is the $(1-c)$th quantile of the distribution of $S_{{\mathbb{X}}}$ under the null hypothesis, and $c$ is the level of significance. The quantile $\hat{Q}_c(n)$ can be estimated via Monte Carlo simulations.

%The test rejects $\mathcal{H}_0$ if the observed value of $S_{\vec{\mathbb{X}}}$ falls within the rejection region.  
%This test can also be interpreted as a test for finite variance within the class of $\alpha$-stable distributions, where $\alpha=2$ indicates finite variance, and $0<\alpha<2$ implies infinite variance. Therefore, rejecting $\mathcal{H}_0$ would suggest that the sample has infinite variance, as discussed in  \cite{skowronek2024}.

The test  described above can be generalized to a more general problem of testing 
%broader testing scenario, where we are interested in determining whether a random sample originates from an $\alpha$-stable distribution with a stability index $\alpha$ greater or equal to a specific value $\alpha^* \in (0,2]$.  The hypotheses of the test are given by
%
\begin{equation}\label{test2}
    \mathcal{H}_0 \text{ : } \alpha \geq \alpha^* \text{, } \mathcal{H}_1 \text{ : } \alpha < \alpha^*
\end{equation}
for any $\alpha^* \in (0,2]$. Again, the test relies on the stochastic monotonicity of $S_{\mathbb{X}}$  with respect to $\alpha$ within the class of symmetric $\alpha$-stable distributions. The rejection region in this case can also be estimated using Monte Carlo simulation. 
%it can be used as the test statistic in this more general scenario as well. Similar to the test for Gaussianity, we employ a one-sided rejection region defined as in (\ref{rejection1}). However, in this case, the critical quantile $\hat{Q}_c(n)$ is derived from the distribution of $S_{\vec{\mathbb{X}}}$ based on samples of length $n$ from $S(\alpha^{*},1)$ distribution. Thus, the test rejects the null hypothesis $\mathcal{H}_0$ if the observed value of $S_{\vec{\mathbb{X}}}$ falls within the one-sided rejection region, indicating that the stability index $\alpha$ is likely to be less than $\alpha^*$. 
The same rejection region can also be used for testing 
\begin{equation}\label{rightsided2}
    \mathcal{H}_0 \text{ : } \alpha = \alpha^* \text{, } \mathcal{H}_1 \text{ : } \alpha < \alpha^*.
\end{equation}
In addition, we can use this methodology for testing 
\begin{equation} \label{leftsided_H}
    \mathcal{H}_0 \text{ : } \alpha \leq \alpha^* \text{, } \mathcal{H}_1 \text{ : } \alpha > \alpha^* \ \ \ {\rm or} \ \ \
    \mathcal{H}_0 \text{ : } \alpha = \alpha^* \text{, } \mathcal{H}_1 \text{ : } \alpha > \alpha^*.
\end{equation}
In this case, the rejection region is the following interval
\begin{equation} \label{leftsided}
    \left[\frac{1}{n}, Q_c(n)\right].
\end{equation}
Finally, we can  use statistic $S_{\mathbb{X}}$ for constructing two-sided tests for $\alpha$
\begin{eqnarray} \label{twosided_H}
    \mathcal{H}_0 \text{ : } \alpha = \alpha^* \text{, } \mathcal{H}_1 \text{ : } \alpha \neq \alpha^*,
\end{eqnarray}
with the rejection region for the following form
\begin{eqnarray} \label{twosided}
    \left[\frac{1}{n}, Q_{c/2}(n)\right] \cup \left[Q_{1-c/2}(n), 1\right],
\end{eqnarray}
where the quantiles (of the $S_{{\mathbb{X}}}$ statistic distribution) in (\ref{leftsided}) and (\ref{twosided}) are  received based on Monte Carlo simulations of the samples of size $n$ from $S(\alpha^{*},1)$ distribution.

{
Since the statistic $S_{{\mathbb{X}}}$ is stochastically decreasing with respect to $\alpha \in (0,2]$, it follows that $\mathbb{P}\left(S_{{\mathbb{X}}}\in {\rm C}\right) \leq c$, where ${\rm C}$ is the appropriate rejection region, for all the considered tests.  Moreover, the power of each test increases as the stability index $\alpha$ moves away from the set of $\alpha$'s under $\mathcal{H}_0$. This is summarized in the result below.
}
%@@
%@@
\begin{Proposition}
All the tests based on the statistic $S_{{\mathbb{X}}}$ defined in (\ref{def_Greenwood}) and described above
for the hypotheses specified in (\ref{hyp1}), (\ref{test2}), (\ref{rightsided2}), (\ref{leftsided_H}), and (\ref{twosided_H}) have size $c$ and are unbiased.
\end{Proposition}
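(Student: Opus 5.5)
The plan is to reduce every claim to two facts about the null and alternative distributions of $S_{\mathbb{X}}$. The first is the stochastic monotonicity of $S_{\mathbb{X}}$ in $\alpha$ within the symmetric $\alpha$-stable family (Proposition 3 in \cite{skowronek2024}). Write $F_\alpha$ for the CDF of $S_{\mathbb{X}}$ under $S(\alpha,1)$; by scale invariance the scale parameter plays no role. The second fact is that $F_\alpha$ is continuous for each $\alpha$. To prove this, I would note that $(|X_1|,\ldots,|X_n|)$ has an absolutely continuous joint law on $(0,\infty)^n$. The map $x\mapsto \sum x_i^2/(\sum x_i)^2$ is real-analytic and nonconstant there, so its level sets have Lebesgue measure zero. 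Hence $S_{\mathbb{X}}$ has no atoms, and for every $p\in(0,1)$ the quantile $Q_p(n)$ of $F_{\alpha^*}$ satisfies $F_{\alpha^*}(Q_p(n))=p$ exactly.

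For size, I would treat the cases in turn. For (\ref{hyp1}), (\ref{rightsided2}) and the equality versions in (\ref{leftsided_H}) and (\ref{twosided_H}), the null is the single value $\alpha^*$ (with $\alpha^*=2$ in (\ref{hyp1})). Continuity then gives $\mathbb{P}_{\alpha^*}(S_{\mathbb{X}}\in{\rm C})$ equal to $1-F_{\alpha^*}(Q_{1-c})=c$, to $F_{\alpha^*}(Q_c)=c$, or to $c/2+c/2=c$, as appropriate. For the composite nulls (\ref{test2}) and the first part of (\ref{leftsided_H}), stochastic monotonicity gives two inequalities:
\begin{equation*}
\mathbb{P}_\alpha(S_{\mathbb{X}}\ge Q_{1-c})\le \mathbb{P}_{\alpha^*}(S_{\mathbb{X}}\ge Q_{1-c})=c \quad \text{for } \alpha\ge\alpha^*,
\end{equation*}
and symmetrically $F_\alpha(Q_c)\le F_{\alpha^*}(Q_c)=c$ for $\alpha\le\alpha^*$. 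The supremum over the null is attained at $\alpha^*$ and equals $c$, so each test has size $c$.

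Unbiasedness of the one-sided tests uses the same inequalities in the other direction. For $\alpha<\alpha^*$ we have $\pi(\alpha)=1-F_\alpha(Q_{1-c})\ge c$. For $\alpha>\alpha^*$ we have $\pi(\alpha)=F_\alpha(Q_c)\ge c$. In each case $\pi$ is monotone in $\alpha$, which also gives the remark that power increases as $\alpha$ moves away from the null set.

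The main obstacle is the two-sided test (\ref{twosided_H}). Its power is
\begin{equation*}
\pi(\alpha)=F_\alpha(Q_{c/2})+1-F_\alpha(Q_{1-c/2}).
\end{equation*}
For $\alpha<\alpha^*$ the second term is at least $c/2$ but the first is at most $c/2$, so stochastic ordering alone does not yield $\pi(\alpha)\ge c$. I need the gain in the upper tail to dominate the loss in the lower tail:
\begin{equation*}
F_\alpha(Q_{1-c/2})-F_{\alpha^*}(Q_{1-c/2})\le F_\alpha(Q_{c/2})-F_{\alpha^*}(Q_{c/2}),
\end{equation*}
and the mirror inequality for $\alpha>\alpha^*$.

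My first attempt would be a stronger ordering than the usual stochastic order. One option is a monotone likelihood ratio in $\alpha$ for the density of $S_{\mathbb{X}}$. Another is a coupling that shows the $\alpha$-shift moves the quantile function of $S_{\mathbb{X}}$ more in the upper quantiles than in the lower ones. I would try to build this from the subordination representation $X=A^{1/2}G$, since $S_{\mathbb{X}}$ becomes heavier-tailed mainly through large values of the mixing variables $A_i$. If that fails, the fallback is to state the two-sided case under this dominance condition and verify the condition numerically via the Monte Carlo quantiles already used to build the rejection regions. That would give genuine unbiasedness $\pi(\alpha)\ge c$ in all cases, with only this one inequality resting on simulation.
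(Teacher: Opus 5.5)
Your handling of size and of the one-sided tests is correct, and it is the paper's argument: stochastic monotonicity of $S_{\mathbb{X}}$ in $\alpha$, evaluated at the boundary value $\alpha^*$. Your continuity argument is a real addition. The paper only derives $\mathbb{P}(S_{\mathbb{X}}\in{\rm C})\le c$ under $\mathcal{H}_0$, which gives level $c$, not size exactly $c$. You are also right that the two-sided case does not follow from stochastic ordering. The paper asserts it in the same sentence, together with ``power increases away from the null set,'' so its justification has exactly the hole you found.

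The gap is that your proposed repairs cannot close it. Write
\[
\pi(\alpha)-c=\bigl[F_\alpha(Q_{c/2})-F_{\alpha^*}(Q_{c/2})\bigr]-\bigl[F_\alpha(Q_{1-c/2})-F_{\alpha^*}(Q_{1-c/2})\bigr],
\]
which must be nonnegative on both sides of $\alpha^*$. Take $\alpha^*\in(0,2)$ interior and set $g(x)=\partial_\alpha F_\alpha(x)|_{\alpha=\alpha^*}$, assuming it exists. Dividing by $\alpha-\alpha^*$ and letting $\alpha\to\alpha^*$ from the left and from the right forces $g(Q_{c/2})=g(Q_{1-c/2})$, i.e.\ $\pi'(\alpha^*)=0$. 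The equal split $c/2,c/2$ has no reason to satisfy this knife-edge condition. Your ``mirror inequality'' is also not a mirror image: for $\alpha>\alpha^*$ it requires the \emph{lower} tail to move more.

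This is why both of your routes fail. A monotone likelihood ratio does not help: the equal-tailed $\chi^2$ test for a normal variance is a standard MLR family, and that test is biased. A coupling showing that the $\alpha$-shift moves upper quantiles more than lower ones would, if strict, give $\pi'(\alpha^*)<0$. It would therefore \emph{prove} bias for $\alpha$ slightly above $\alpha^*$. Numerical verification is not a substitute, and it would typically exhibit $\pi(\alpha)<c$ on one side.

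So unbiasedness of the test for (\ref{twosided_H}) with region (\ref{twosided}) is not established by any ordering argument, and it should be expected to fail for interior $\alpha^*$. You have two defensible options:
\begin{itemize}
\item Restrict the unbiasedness claim to the one-sided tests; the two-sided test still has size $c$.
\item Replace $c/2,c/2$ by $c_1+c_2=c$ chosen so that $g(Q_{c_1})=g(Q_{1-c_2})$. You would then need an additional argument for global unbiasedness, e.g.\ a TP$_3$-type property of the laws of $S_{\mathbb{X}}$ that makes $\pi$ quasi-convex.
\end{itemize}
Only for $\alpha^*=2$, where the alternative is one-sided and (\ref{twosided_H}) coincides with (\ref{hyp1}), could a one-directional dominance condition suffice.
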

%@@
%@@

The testing methodology discussed above can be used to obtain one and two sided confidence sets of the stability index
$\alpha$ via the standard process of inverting the test. More precisely, for a given random sample ${\mathbb{X}}$ from a symmetric $\alpha$-stable distribution, the confidence set consists of those values of $\alpha^*$ for which the relevant test does not reject the null hypothesis with the given data. By the stochastic monotonicity of the relevant test statistic with respect to the parameter $\alpha$, these confidence sets are actually intervals, and their exact form can be obtained by Monte Carlo simulations.

%The testing methodology discussed above can also be viewed as the estimation of the $\alpha$ parameters for a random sample from the symmetric $\alpha$-stable distribution. More precisely, for given a random sample $\vec{\mathbb{X}}$, we perform the testing procedure for the considered hypotheses, for each $\alpha^{*}$ value from the set $\alpha^{**} = \{\alpha_1^{*}, \alpha_2^{*}, \ldots, \alpha_K^{*}\}$. For each value $\alpha^{*}$ from this set we separately construct the rejection region  and check whether the value of the test statistic $S_{\vec{\mathbb{X}}}$ calculated for the random sample falls within the constructed rejection region. The confidence interval of $\alpha$ corresponds to those values of $\alpha^{*}$ in the set $\alpha^{**}$ for which the value of $S_{\vec{\mathbb{X}}}$ lies outside the corresponding rejection region.

%@@
%@@
\subsection{The bivariate case}
\label{biv}
%@@
%@@
%In the bivariate case, we proceed in the same way as in the univariate scenario. The only difference lies in the test statistic applied. Specifically, to test whether a bivariate random sample comes from a bivariate Gaussian distribution (which corresponds to the $\mathcal{H}_0$ hypotheses defined in (\ref{hyp1}) for the univariate scenario), we can use either the statistic $S_{\vec{\mathbb{X}}}^1$ or the statistic $S_{\vec{\mathbb{X}}}^2$,  defined in Eq. (\ref{def_Greenwood2}) and Eq. (\ref{def_Greenwood1}), respectively. However, the choice of the test statistic will affect the construction of the rejection region (\ref{rejection1}). For the $S_{\vec{\mathbb{X}}}^1$ statistic, the rejection region is constructed based on $M$ Monte Carlo simulations from a bivariate Gaussian-distributed samples of length $n$. These samples can be simulated for any value of the $\rho$ parameter, as it does not affect the distribution of the test statistic. However, when considering the $S_{\vec{\mathbb{X}}}^2$ statistic, the rejection region (\ref{rejection1}) is constructed using samples from the bivariate Gaussian distribution with $\rho=1$. In consequence, the appropriate quantile used in the rejection region is calculated from the $M$ values of the $S_{\vec{\mathbb{X}}}$ statistic calculated for the random sample from the $\chi^2$ distribution with one degree of freedom.

The approach to testing in the bivariate case is similar to the one for the univariate problem. The main difference is the test statistic. When testing bivariate Gaussianity the hypotheses are the same as those given in Section \ref{int}. However, we have a  choice between two test statistics.  Specifically, we can use either statistic $S_{\vec{\mathbb{X}}}^1$ or  statistic $S_{\vec{\mathbb{X}}}^2$,  which are defined in Eq. (\ref{def_Greenwood2}) and Eq. (\ref{def_Greenwood1}), respectively. The choice of the statistic affects the construction of the  rejection region.

\begin{itemize}

\item Test statistic is  $S_{\vec{\mathbb{X}}}^1$: The rejection region is constructed based on Monte Carlo simulations of samples of length $n$ from a bivariate Gaussian distribution with any correlation structure, because the distribution of  $S_{\vec{\mathbb{X}}}^1$ is not sensitive to the correlation structure. 

\item Test statistic is $S_{\vec{\mathbb{X}}}^2$: The rejection region is based on samples from the bivariate Gaussian distribution with $\beta=0$ (or $\rho=1$).  In this case, the  quantile needed for the  rejection region is calculated using simulated values of the  $S_{\vec{\mathbb{X}}}^2$ computed for samples from the $\chi^2$ distribution with one degree of freedom.

\end{itemize}

In the case where the $\mathcal{H}_0$ and $\mathcal{H}_1$ hypotheses are defined as in (\ref{test2}), (\ref{rightsided2}), (\ref{leftsided_H}), and (\ref{twosided_H}), we propose using $S_{\vec{\mathbb{X}}}^1$ as the test statistic, with the rejection region constructed based on  Monte Carlo simulated samples from a bivariate sub-Gaussian distribution with parameter $\alpha^{*}$. Similar to testing bivariate Gaussianity, the selection of the $\rho$ parameter (used to construct the rejection region) is not important, as this statistic is independent of this parameter.

Both statistics, $S_{\vec{\mathbb{X}}}^1$ and $S_{\vec{\mathbb{X}}}^2$, can also be applied when testing for a bivariate Gaussian distribution (corresponding to the hypotheses defined in (\ref{hyp1})) and a bivariate sub-Gaussian distribution with a specific value of the parameter $\alpha$ (corresponding to the hypotheses defined in (\ref{test2}), (\ref{rightsided2}), (\ref{leftsided_H}), and (\ref{twosided_H})) under the assumption that the $\rho$ parameter is known. In this case, the rejection regions are defined similarly to the cases discussed previously. However, the appropriate quantile in (\ref{rejection1}) is calculated for the bivariate Gaussian or sub-Gaussian (with the tested $\alpha^*$ parameter) random samples with the given $\rho$ parameter. While this approach has less practical importance compared to the cases discussed earlier, one can perform the testing procedure separately for $\rho$ parameters from a given set.

%@@
%@@
%@@
\section{Simulation study}
\label{sim:section}
%@@
%@@
%@@
In this section, we present results  on the power of the tests described in the previous section and their comparison with other tests for bivariate Gaussianity (i.e. for sub-Gaussian distribution with $\alpha=2$) based on $S_{\vec{\mathbb{X}}}^1$ and $S_{\vec{\mathbb{X}}}^2$ statistics. The simulations were performed in MATLAB version 23.2.0.2365128 (R2023b). We consider  the following hypotheses:  $\mathcal{H}_0$ - a  random sample comes from a bivariate Gaussian distribution, $\mathcal{H}_1$- a random sample comes from bivariate sub-Gaussian distribution with $\alpha < 2$. The rejection regions for both $S_{\vec{\mathbb{X}}}^1$ and $S_{\vec{\mathbb{X}}}^2$ test statistics are calculated based on $10,000$ Monte Carlo simulations of the bivariate random samples under the null hypothesis. For $S_{\vec{\mathbb{X}}}^1$ and $S_{\vec{\mathbb{X}}}^2$, we  calculate the rejection regions based on bivariate Gaussian distribution with $\rho=1$ (i.e. $\beta=0$, see Eq. (\ref{beta_rho})). Although  we present only selected illustrative examples, we applied our methods to samples with a wider range of parameters with similar results.

In order to demonstrate the impact of  parameter $\beta$ on the $S_{\vec{\mathbb{X}}}^2$-based test,  we present the power of the test for different values of  $\beta$   in $\mathcal{H}_1$ in Fig.~\ref{fig:S2rho}. Since  we calculate the critical region for $\alpha=2$ and $\beta=0$,  the power of a test may vary for different values of $\beta$.  For each considered $\alpha$, the test based on $S_{\vec{\mathbb{X}}}^2$ had the highest power  for $\beta=0$ and thus $\beta=0$ was used to calculate the critical region. This follows from the stochastic order of statistic $S_{\vec{\mathbb{X}}}^2$  with respect to $\beta$ discussed in  Section \ref{sec:greenwood}. Finally, recall that statistic $S_{\vec{\mathbb{X}}}^1$ is independent on $\rho$ (and thus $\beta$). 

\begin{figure}
    \centering
    \includegraphics[width=0.8\linewidth]{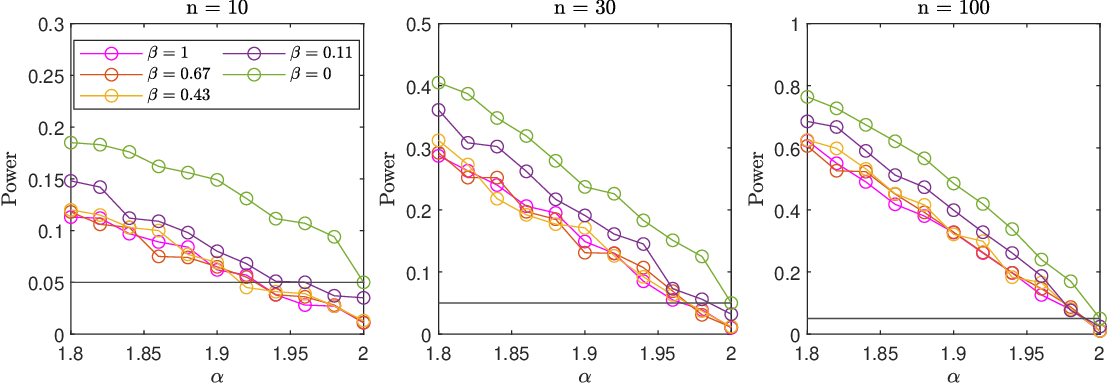}
    \caption{Comparison of the power curves of a  test based on $S_{\vec{\mathbb{X}}}^2$ statistic obtained for  different values of $\beta$ parameters (see Eq. (\ref{beta_rho})) for sample sizes $n = 10,30,100$. The power was obtained based on $1000$ Monte Carlo simulated samples corresponding to $\mathcal{H}_1$ hypothesis. The critical region was constructed based on $10000$ simulated samples from bivariate Gaussian distribution with $\rho=1$ (i.e. $\beta=0$).}
    \label{fig:S2rho}
\end{figure}

The power of tests based on $S_{\vec{\mathbb{X}}}^1$ and $S_{\vec{\mathbb{X}}}^2$ statistics we compare with the power of the following tests for multivariate Gaussianity considered in the literature: { \textcolor{black}{ Jarque-Bera (JB) test~\cite{jb2test}, Henze-Zirkler (HZ) test~\cite{henzezirkler}, Mardia's kurtosis (Kurt) and skewness (Skew) tests~\cite{mardia}, and  Royston (Royst) test~\cite{royston}.   To estimate the power  we simulated $1,000$ random samples under the alternative hypothesis.  Under $\mathcal{H}_1$, we selected stability indexes in the range $\alpha \in \{1.8, 1.82, \ldots, 1.98, 2\}$. We present the results for sample sizes $n=10,30$ and $n=100$ and significance level $c=0.05$. In the remainder of this paper  we denote tests based on statistics $S_{\vec{\mathbb{X}}}^1$ and $S_{\vec{\mathbb{X}}}^2$ as $S1$ test and $S2$ test, respectively.}

The power comparison of the $S1$ and $S2$ tests with the other tests  is presented in Fig.~\ref{fig:S1literature} and Fig.~\ref{fig:S2literature}.  We observe that the $S1$ test performs better than other tests  for $n=10$ (see Fig.~\ref{fig:S1literature}). For $n=30$ and $\alpha < 1.86$,  Kurtosis test, Skewness test and Royston test outperform the  $S1$ test. However, as stability index approaches  $2$ (i.e. the bivariate sub-Gaussian  distribution approaches  the Gaussian law) the $S1$ test outperforms all other tests considered in this comparison. For $n=100$, the $S1$ test is outperformed by Kurtosis test and Royston tests. Thus, we conclude that $S1$ test is more powerful than the other tests we considered for small sample sizes ($n<30$), and  in particular when $\alpha$ close to $2$. 

 Since the power of $S2$ test  depends on the value of $\beta$, in Fig.~\ref{fig:S2literature} we present the power of $S2$  for three different values of $\beta$, namely  $\beta\in\{0.005,0.051,0.081\}$ (corresponding to  $\rho \in \{0.99, 0.9, 0.85\}$).  For $\beta=0.005$ and $n=10$, and $\beta \in\{0.052,0.081\}$ with $n = 30$, test $S2$ outperforms the $S1$ and all other considered tests, especially when $\alpha$ is close to 2. For $\beta = 0.081$ and $n = 10, 30$, as $\alpha$ approaches $2$, the $S1$ test is the most powerful among all the tests considered. For $n = 100$, the $S1$ and $S2$ tests are outperformed by the Kurtosis and Royston tests. Although we do not  present results for larger $\beta$'s  (i.e. smaller $\rho$), awe found that in those cases, the $S2$ test is less powerful than the $S1$ test and all other considered tests.  Based on our findings, we recommend using the $S1$ test (instead of the $S2$ test) when the correlation  between the data is unknown and the number of observations is small.

\begin{figure}
    \centering
    \includegraphics[width=0.8\linewidth]{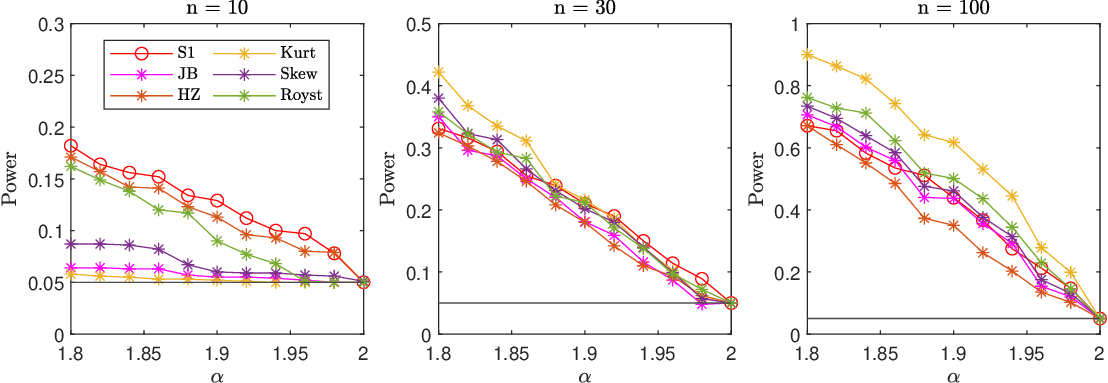}
    \caption{Comparison of the power of a test  based on $S_{\vec{\mathbb{X}}}^1$ statistic with other tests known in the literature (JB, HZ, Kurt, Skew and Royst test). The power curves were obtained based on $1000$ Monte Carlo simulated samples corresponding to $\mathcal{H}_1$ hypothesis. The critical region was constructed based on $10000$ simulated samples from bivariate Gaussian distribution.}
    \label{fig:S1literature}
\end{figure}

\begin{figure}
    \centering
    \includegraphics[width=0.8\linewidth]{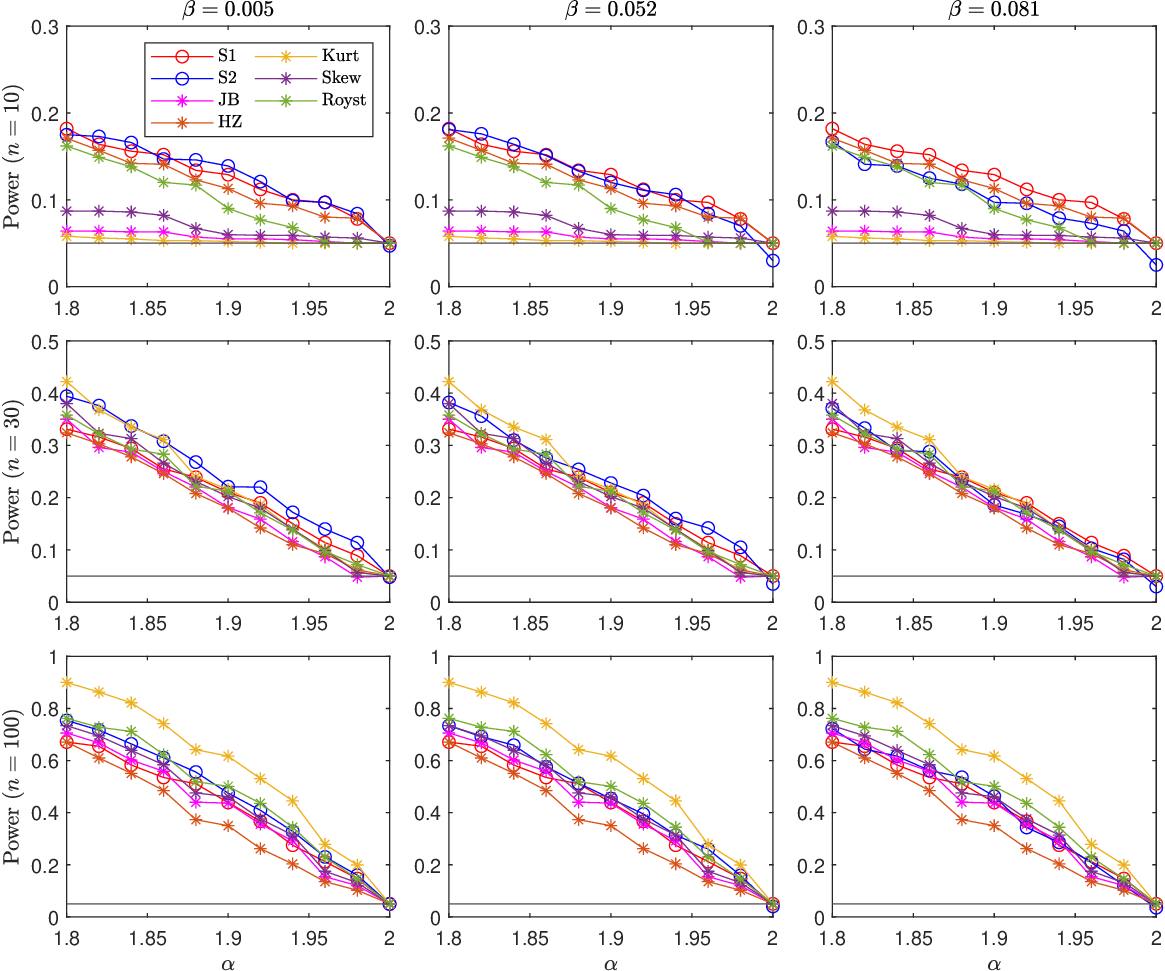}
    \caption{Comparison of the power of tests  for for bivariate Gaussian distribution based on $S_{\vec{\mathbb{X}}}^1$ and $S_{\vec{\mathbb{X}}}^2$ statistic obtained for different values of $\beta$ parameter  with other tests known in the literature (JB, HZ, Kurt, Skew and Royst test). The power curves were obtained based on $1000$ Monte Carlo simulated samples corresponding to $\mathcal{H}_1$ hypothesis. The critical region was constructed based on $10000$ simulated samples from bivariate Gaussian distribution with $\rho=1$ (i.e. with $\beta=0$).}
    \label{fig:S2literature}
\end{figure}

%@@
%@@
%@@
\section{Data analysis}
\label{sec:real}
%@@
%@@
%@@
In this section, we apply our methodology to analyze the financial data describing the main risk factors considered by a copper mining company (KGHM). The main question is whether the data comes from heavy tail or Gaussian models.  The dataset was introduced and discussed in~\cite{CUprice}, where the authors proposed a bivariate time series model based on symmetric $\alpha-$stable distribution. This work concentrates on the analysis of the residuals for that model. 

\subsection{Univariate case}

Our  bivariate dataset~\cite{CUprice} represents a vector of observations corresponding to USDPLN exchange rate and CU (copper) price (in USD). The  data are not homogeneous in time, which means that the behavior  of the data changes with time. To obtain reasonably homogeneous data sets for analysis, we divided the full data into two homogeneous subsets representing different market regimes using a hidden Markov model for $\alpha-$stable distribution (see, e.g., ~\cite{Hmm_rabiner}, \cite{CUprice}).   This segmentation resulted in following two  time frames: (1) the 2008 crisis regime (R1) from March 2006 to the first half of 2012, and (2) a stable market regime (R2) from the second half of 2012 to second half of 2020. Regimes R1 and R2 consist of $n=335$ and $n=423$ observations, respectively. We use the results of ~\cite{CUprice} where the authors  fitted the following VAR-type time series model of order $1$ with symmetric $\alpha$-stable innovations
\begin{equation}\label{VAR}
   \mathbf{X}_t = M \mathbf{X}_{t-1} + \mathbf{\xi}_t, ~~t\in \mathbb{Z},
\end{equation}
where $\mathbf{X}_t=\left(X_1^{(t)},X_2^{(t)}\right)$ and $\mathbf{\xi}_t=\left(\xi_1^{(t)},\xi_2^{(t)}\right)$ is a bivariate symmetric $\alpha$-stable residual series. In the model equation (\ref{VAR})  $M$ is a $2\times 2$ matrix parameter.  The estimates of $M$ were  obtained by a modified Yule-Walker approach~\cite{Grzesiek2021} with results as follows
\begin{equation}
    \hat{M}_{R1} =
    \begin{bmatrix}
        0.2927 & 0\\
        0 & 0.2100
    \end{bmatrix},
    \quad
    \hat{M}_{R2} =
    \begin{bmatrix}
        0.281 & 0\\
        0 & 0.1386
    \end{bmatrix}.
\end{equation}
The diagonal form of the $\hat{M}$ matrices indicates that  the components of the model (\ref{VAR}) are dependent only through the dependence of the residual series, i.e., $\xi_1^{(t)}$ and $\xi_2^{(t)}$. Given the estimated parameters of the process, we obtained the time series of residuals for both processes presented in Fig.~\ref{fig:CU}:  USDPLN exchange rates (left panel),  CU price (right panel).

\begin{figure}[htp!]
   \centering
   \includegraphics[width=0.45\textwidth]{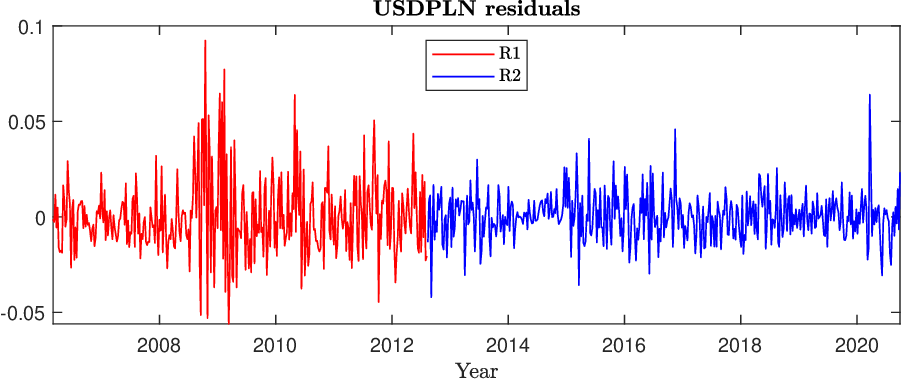}%\\
 %  \vspace{0.5cm}
    \includegraphics[width=0.45\textwidth]{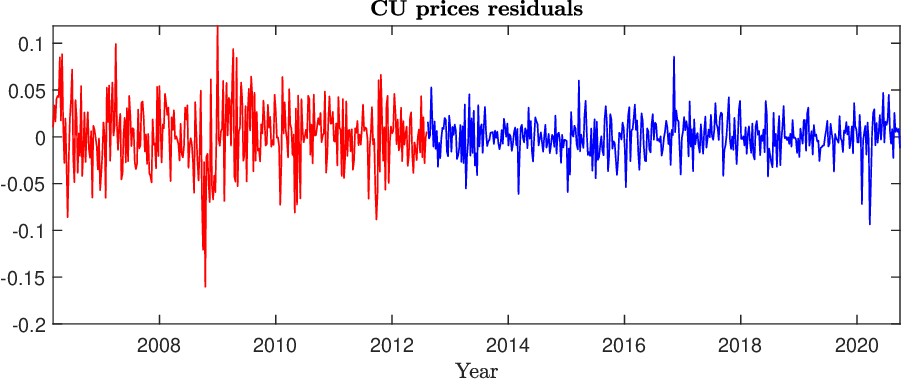}
    \caption{The residuals of analyzed financial dataset after using VAR(1) model with symmetric $\alpha-$stable distribution. Left panel represents residuals of USDPLN exchange rate from 2006 to 2020 and the right panel represents residuals of CU prices from 2006 to 2020. Red color presents regime R1 (2008 crisis) and blue colour represents regime R2 (stable market situation).}
    \label{fig:CU}
\end{figure}
To estimate $\alpha$ we used the results of ~\cite{CUprice} where the estimates $\hat{\alpha} = 1.7229$ for R1 and $\hat{\alpha} = 1.8424$ for R2 (for USDPLN) and $\hat{\alpha} = 1.9219$ for R1 and $\hat{\alpha} = 1.8243$ for R2 (for CU price). Next, we computed the 95\% confidence intervals for $\alpha$ by the standard "test inversion" method described in section \ref{sec:testing}. The results are presented in Table \ref{tab2}.

\begin{table}[htp!]
\centering
\begin{tabular}{c|c|c}
\hline
regime & USDPLN& CU price  \\ \hline \hline
R1 & [1.69,1.98] &  [1.76,1.99] \\\hline
R2 &[1.75,1.99]& [1.72,1.99] \\\hline
\end{tabular}
\caption{95\% confidence interval for $\alpha$ for USDPLN and CU price for R1 and R2 regimes. The results are obtained based on $1000$ Monte Carlo simulations of the random samples from symmetric $\alpha-$stable distribution. }\label{tab2}
\end{table}

The confidence intervals support the conclusion that the USDPLN and CU prices follow heavy tail $\alpha-$stable distributions ($\alpha<2$) and confirm the conclusions of analysis presented in~\cite{CUprice}.  
\subsection{Bivariate case}

In this section, we apply tests $S1$ and $S2$ for bivariate Gaussian distribution within the class of sub-Gaussian distributions to the bivariate data with marginals analyzed above. More precisely, we treat the residual series presented in Fig. \ref{fig:CU} as a bivariate vector. We consider residuals from R1 and  R2 regimes separately.  In addition to tests $S1$ and $S2$ we performed the Mardia's kurtosis test~\cite{mardia}, which is thought of as one of the best tests for multivariate normality.  In case of $S2$ test, we needed to standardize the univariate data (corresponding to separate components and regimes), and we accomplish this using the standardization approach based on conditional standard deviation described in \cite{maraj2023}.
The values of the test statistics and the critical values of the tests for bivariate Gaussian distribution, are presented in Table \ref{tab3}. The  critical values were estimated based on $10,000$ Monte Carlo simulated random samples from bivariate Gaussian distribution with $\rho=1$.

\begin{table}[htp!]
\centering
\begin{tabular}{l|l|l|l}
\hline
regime & Kurt test & S1 test & S2 test \\ \hline\hline
R1     &  $8.99 \text{  } (1.57)$         &  0.0052 $(0.0049)$        &   0.0114 $(0.0103)$   \\ \hline
R2     & $ 14.06  \text{  }  (1.63)$          & 0.0043 $(0.0039)$        & 0.0102 $(0.0081)$        \\ \hline
\end{tabular}
\caption{The values of the test statistics (calculated for the residual series from regimes R1 and R2) and the critical values (in parenthesis) of the tests (Kurt test, S1 test and S2 test) for the bivariate Gaussian distribution. The critical values are calculated based on $10000$ Monte Carlo simulated samples coming from bivariate Gaussian distribution with $\rho=1$.}\label{tab3}
\end{table}

All tests reject the bivariate normality, thus our conclusion is that the residual series for both regimes follow a heavy tail symmetric $\alpha-$stable distribution ($\alpha < 2$). 

\section{Conclusions}
In this paper, we present two statistical tests based on the  modified Greenwood statistic introduced in \cite{skowronek2024}, for testing multivariate $\alpha$-stable distributions. Our  focus is on a specific sub-class of $\alpha$-stable distributions, namely the sub-Gaussian distributions, within the bivariate context. The classical Greenwood statistic along with its modified version for univariate  random samples has been extensively applied across various fields. A key feature of both the classical and modified Greenwood statistics is their stochastic monotonicity within the class of univariate symmetric $\alpha$-stable distributions, a property that renders these statistics suitable as test statistics for hypotheses within this broad family of distribution. Extending the modified Greenwood statistic to multivariate random vectors facilitates testing of multivariate $\alpha$-stable distributions, particularly the multivariate Gaussian model.  A common challenge found in the literature, lies in distinguishing between Gaussian and the heavy tailed $\alpha$-stable distributions, especially when the stability parameter $\alpha$ approaches 2 (i.e., when the $\alpha-$stable distribution is very close  to Gaussian). This issue is prominent in both univariate and multivariate cases. The methodology introduced in this article offers a solution for addressing this problem. Through Monte Carlo simulations we demonstrate the performance of the testing processes, with particular attention to the cases where  $\alpha$ is close to  2 and the sample sizes are small. We also compare the power of the proposed methods with common tests for multivariate Gaussianity.   The practical value of our methods is highlighted by analysis of data  on financial risk factors. However,  our approach can also be applied to data from other fields, such as vibration-based condition monitoring, see e.g. \cite{skowronek2024,ecem2023,SKOWRONEK2023110465}.
\section*{Acknowledgements}
The work of K.S. and A.W. is supported by National Center of Science under Sheng2 project No. UMO-2021/40/Q/ST8/00024 "NonGauMech - New methods of processing non-stationary signals (identification, segmentation, extraction, modeling) with non-Gaussian characteristics for the purpose of monitoring complex mechanical structures".
This paper was partially written during A.K.P.'s and T.J.K.'s visit to the Mathematical Institute at the University of Wroclaw, Poland, in the summer of 2024. The authors express their gratitude to the Institute for providing accommodations during their stay. T.J.K.'s travel was partially supported by the UNR VPRI travel grant, and A.K.P.'s travel was partially supported by the Department of Mathematics \& Statistics at the University of Nevada, Reno.

\bibliography{mybibliography}

\end{document}